\newtheorem{definition}{Definition}
\newtheorem{theorem}{Theorem}
\newenvironment{proof}{  \noindent\textit{Proof}.}{} 
\def\squareforqed{\hbox{\rlap{$\sqcap$}$\sqcup$}}
\def\qed{\ifmmode\squareforqed\else{\unskip\nobreak\hfil
\penalty50\hskip1em\null\nobreak\hfil\squareforqed
\parfillskip=0pt\finalhyphendemerits=0\endgraf}\fi}
\title{Is Consciousness Computable? Quantifying Integrated Information Using Algorithmic Information Theory}
\author{{\large \bf Phil Maguire (pmaguire@cs.nuim.ie)} \\
  {\large \bf Philippe Moser (pmoser@cs.nuim.ie)}\\
  Department of Computer Science \\
  NUI Maynooth, Ireland
\AND {\large \bf Rebecca Maguire (rebecca.maguire@ncirl.ie)} \\
  School of Business, National College of Ireland \\
  IFSC, Dublin 1, Ireland
\AND {\large \bf Virgil Griffith (virgil@caltech.edu)} \\
  Computation and Neural Systems, \\
  Caltech, Pasadena, California}
\begin{document}

\maketitle

\begin{abstract}
In this article we review Tononi's (2008) theory of consciousness as integrated information. We argue that previous formalizations of integrated information (e.g. Griffith, 2014) depend on information loss. Since lossy integration would necessitate continuous damage to existing memories, we propose it is more natural to frame consciousness as a lossless integrative process and provide a formalization of this idea using algorithmic information theory. We prove that complete lossless integration requires noncomputable functions. This result implies that if unitary consciousness exists, it cannot be modelled computationally.   

\textbf{Keywords:} 
Consciousness; integrated information; synergy; data compression; modularity of mind.
\end{abstract}

\section{Introduction}

Continuing advances in neuroscience are allowing precise neural correlates of different aspects of consciousness to be uncovered. For example, damage to certain areas of the cortex has been shown to impair the experience of color, while other lesions can interfere with the perception of shape (Tononi, 2008). The hard question that remains is understanding how these neural correlates combine to give rise to subjective experiences.

Tononi's (2008) integrated information theory provides a theoretical framework which allows this issue to be meaningfully addressed. The theory proposes that consciousness is an information processing phenomenon and can thus be quantified in terms of a systems' organizational structure, specifically its capacity to integrate information. According to Tononi, what we mean when we say that the human brain produces consciousness is that it integrates information, thus producing behaviour which reflects the actions of a unified, singular system.

Tononi (2008) explains the foundations of his theory through two thought experiments, which we adapt slightly here. The first thought experiment establishes the requirement for a conscious observation to generate information. The second establishes the requirement for a conscious observation to be integrated with previous memories, hence generating integrated information. 

\subsection{Requirement 1: Generating Information}

Let's imagine that a factory producing scented candles invests in an artificial smell detector. The detector is used for sampling the aroma of the candles passing on the conveyor belt below and directing them to the appropriate boxes. Let's suppose that the factory is currently producing two flavors of scented candle: chocolate and lavender. In this case the detector only needs to distinguish between two possible smells. 

A batch of chocolate scented candles is passed underneath and the sensor flashes \emph{chocolate}. Can we say that the detector has actually experienced the smell of chocolate? Clearly it has managed to distinguish chocolate from lavender, but this does not guarantee that it has experienced the full aroma in the same manner as humans do. For example, it may be the case that the detector is latching onto a single molecule that separates the two scents, ignoring all other aspects. The distinction between chocolate and lavender is a binary one, and can thus be encoded by a single bit. In contrast, humans can distinguish more than 10,000 different smells detected by specialized olfactory receptor neurons lining the nose (Alberts et al., 2008). When a human identifies a smell as \emph{chocolate} they are generating a response which distinguishes between 10,000 possible states, yielding $\log_2 10,000 = 13.3$ bits of information.

The important point that Tononi (2008) raises with his initial thought experiment is that the quality of an experience is necessarily expressed relative to a range of alternative possibilities. For example, if the whole world was coloured the same shade of red, the act of labeling an object as `red' would hold no meaning. The informativeness of `red' depends on its contrast with other colours. Descriptions of experiences must be situated within a context where they discriminate among many alternatives (i.e. they must generate information). 

\subsection{Requirement 2: Generating Integrated Information}

Tononi's (2008) second thought experiment establishes that information alone is not sufficient for conscious experience. 

Imagine that the scented candle factory enhances the artificial smell detector so that now it can distinguish between 1 million different smells, even more than the human nose. Can we now say that the detector is truly smelling chocolate when it outputs \emph{chocolate}, given that it is producing more information than a human? What is the difference between the detector's experience and the human experience?

Like the human nose, the artificial smell detector uses specialized olfactory receptors to diagnose the signature of the scent and then looks it up in a database to identify the appropriate response. However, each smell is responded to in isolation of every other. The exact same response to a chocolate scent occurs even if the other 999,999 entries in the database are deleted. The factory might as well have purchased a million independent smell detectors and placed them together in the same room, each unit independently recording and responding to its own data. 

According to Tononi (2008), the information generated by such a system differs from that generated by a human insofar as it is not integrated. Because it may as well be composed of individual units, each with the most limited repertoire, an unintegrated set of responses cannot yield a subjective experience. To bind the repertoire, a system must generated integrated information. Somehow, the response to the smell of chocolate must be encoded in terms of its relationship with other experiences. 

\subsection{Consciousness as Integrated Information}

Inside the human nose there are different neurons which are specialized to respond to particular smells. The process of detection is not itself integrated. For example, with selective damage to certain olfactory receptors a person could conceivably lose their ability to smell chocolate while retaining their ability to smell lavender. However, the human experience of smell is integrated as regards the type of information it records in response.

According to Tononi's (2008) theory, when somebody smells chocolate the effect that it has on their brain is integrated across many aspects of their memory. Let's consider, for example, a human observer named Amy who has just experienced the smell of chocolate. A neurosurgeon would find it very difficult to operate on Amy's brain and eliminate this recent memory without affecting anything else. According to the integrated information theory, the changes caused by her olfactory experience are not localised to any one part of her brain, but are instead widely dispersed and inextricably intertwined with all the rest of her memories, making them difficult to reverse. This unique integration of a stimulus with existing memories is what gives experiences their subjective (i.e. observer specific) flavour. This is integrated information.

In contrast, deleting the same experience in the case of an artificial smell detector would be easy. Somewhere inside the system is a database with discrete variables used to maintain the detection history. These variables can simply be edited to erase a particular memory. The information generated by the artificial smell detector is not integrated. It does not influence the subsequent information that is generated. It lies isolated, detached and dormant. 

The same reasoning can be used to explain why a video camera, which generates plenty of information, remains unconscious, in contrast to a person viewing the same scene. The memories generated by the video camera can be easily edited independently of each other. For example, I can decide to delete all of the footage recorded yesterday between 2pm and 4pm. In contrast, a person viewing the same scenes encodes information in an integrated fashion. I cannot delete Amy's memories from yesterday because all of her memories from today have already been influenced by them. The two sets of memories cannot easily be disentangled. When it comes to human consciousness it is not possible to identify any simple divisions or disjoint components. 

What Tononi's (2008) theory proposes is that when people use the term `consciousness' to describe the behaviour of an entity they have the notion of integrated information in mind. We attribute the property of being conscious to systems whose responses cannot easily be decomposed or disintegrated into a set of causally independent parts.  In contrast, when we say that a video camera is unconscious, what we mean is that the manner in which it responds to visual stimuli is unaffected by the information it has previously recorded.

\subsection{Quantifying Integrated Information}

Tononi (2008) seeks to formalize the measurement of integrated information. His central idea is to quantify the information generated by the system as a whole above and beyond the information generated independently by its parts. For integrated information to be high, a system must be connected in such a way that information is generated by causal interactions among rather than within its parts. 

Assuming that the brain generates high levels of integrated information, this implies that the encoding of a stimulus must be deeply connected with other existing information in the brain. We now address the question of what form of processing might enable such integrated information to arise. 

Griffith (2014) rebrands the informational difference between a whole and the union of its parts as `synergy'. He presents the XOR gate as the canonical example of synergistic (i.e. integrated) information. Consider, for example, a XOR gate with two inputs, $X_1$ and $X_2$, which can be interpreted as representing a stimulus and an original brain state. They combine integratively to yield $Y$, the resultant brain state which encodes the stimulus. Given $X_1$ and $X_2$ in isolation we have no information about $Y$. The resultant brain state $Y$ can only be predicted when both components are taken into account at the same time. Given that the components $X_1$ and $X_2$ do not have any independent causal influence on $Y$, all of the information about $Y$ here is integrated. 


One issue with presenting the XOR gate as the canonical example of synergistic information is that it is lossy. A two bit input is reduced to a single bit output, meaning that half the entropy has been irretrievably lost. If the brain integrated information in this manner, the inevitable cost would be the destruction of existing information. While it seems intuitive for the brain to discard irrelevant details from sensory input, it seems undesirable for it to also hemorrhage meaningful content. In particular, memory functions must be vastly non-lossy, otherwise retrieving them repeatedly would cause them to gradually decay. 

We propose that the information integration evident in cognition is not lossy. In the following sections we define a form of synergy, based on data compression, which does not rely on the destruction of information, and subsequently explore its implications.

\section{Data Compression as Integration}

Data compression is the process by which an observation is reduced by identifying patterns within it. For example the sequence ${4, 6, 8, 12, 14, 18, 20, 24\ldots}$ can be simplified as the description ``odd prime numbers +1''. The latter representation is shorter than the original sequence, hence it evidences data compression. 

A close link exists between data compression and prediction. Levin's (1974) Coding Theorem demonstrates that, with high probability, the most likely model that explains a set of observations is the most compressed one. In addition, for any predictable sequence of data, the optimal prediction of the next item converges quickly with the prediction made by the model which has the simplest description (Solomonoff, 1964). As per Occam's razor, concise models make fewer assumptions and are thus more likely to be correct. 

These insights lay the foundation for a deep connection between data compression, prediction and understanding, a theoretical perspective on intelligence and cognisance which we refer to as `compressionism'. Adopting this perspective, Maguire and Maguire (2010) propose that the binding of information we associate with consciousness is achieved through sophisticated data compression carried out in the brain, suggesting a link between this form of processing and Tononi's (2008) notion of information integration.

In the case of an uncompressed string, every bit carries independent information about the string. In contrast, when a text file is compressed to the limit, each bit in the final representation is fully dependent on every other bit for its significance. No bit carries independent information about the original text file. For an uncompressed file, damaging the first bit leaves you with a 50\% chance of getting the first bit right and 100\% chance of getting the rest of the bits right. For an optimally compressed file, damaging the first bit corrupts everything and leaves you with only a 50\% chance of getting all the bits right and a 50\% chance of getting them all wrong. Clearly, the information encoded by the bits in the compressed file is more than the sum of its parts, highlighting a link between data compression and Tononi's (2008) concept of integrated information.

In the following section we formally prove that, given the Partial Information Decomposition (Williams \& Beer, 2010) formulation of synergy, the amount of integrated information an information-lossless process produces on statistically independent inputs is equivalent to the data compression it achieves. 

We begin with a brief description of algorithmic information theory (see Li and Vity\'{a}nyi, 2008, for more details). We use \emph{strings} to refer to finite binary sequence, i.e. an element of set
$2^{< \omega}$. Any finite object can be encoded into a string in some natural way.
We are interested in effective 
descriptions of strings (i.e. computable by a universal computer i.e. Turing machine) . For a string $x$, its (plain) Kolmogorov complexity $C(x)$ is the length of the shortest effective description
of $x$. More formally, fix a universal Turing machine $U$. $C(x)$ is the length of the shortest 
program $x^*$ such that $U$ on input $x^*$ outputs $x$. It can be shown that the value of $C(x)$ does not depend on the choice of 
$U$ up to an additive constant.  $C(x)$ is the amount of algorithmic information contained in $x$.
A random string is a string $x$ that cannot be compressed, e.g. such that $C(x)$ is at least the length of $x$.

For two strings $x,y$ the conditional Kolmogorov complexity $C(x|y)$ of $x$ given $y$
is the size of the shortest program $q$ such that $U$ on input $p$ and provided $y$ as an extra input, outputs $x$.
The information $x$ has about $y$ is defined as 
$I(x:y) = C(x) - C(x|y) =^+ C(y) - C(y|x)$, where $=^+$ means equal up to a $O(1)$ (constant) term. The idea of C-based synergy (Griffith, 2014) 
is to define four intuitive slices of the C-information of the function $m: (x_1,x_2) \mapsto y$.
\begin{enumerate}
    \item R: the amount of the C-information strings $x_1$ and $x_2$ convey redundantly about $y$, or, equivalently, the amount of data compression that $y$ achieves assuming statistically independent inputs
    \item $U_1$: the amount of C-information that only string $x_1$ conveys about $y$.
    \item $U_2$: the amount of C-information that only string $x_2$ conveys about $y$.
    \item $S$: the amount of C-information the concatenation string, $(x_1,x_2)$ conveys about $y$ not conveyed by either $x_1$ or $x_2$.
\end{enumerate}

From the Partial Information Decomposition framework (Williams \& Beer, 2010), we have the following equalities relating the nonnegative scalars $R$, $U_1$, $U_2$, and $S$:
\begin{equation*}
\begin{split}
    R + U_1 &= I(x_1:y) \\
    R + U_2 &= I(x_2:y) \\
    R + U_1 + U_2 + S &= I(x_1,x_2:y) \; .
\end{split}
\end{equation*}

First, using the three equalities above we can define an easy expression for the synergy minus the redundancy,
\begin{align*}
    I(x_1,x_2:y) - I(x_1:y) - I(x_2:y)  &= S - R \; .
\end{align*}

\begin{theorem}
 Given  $C(x_1,x_2|y)=0$, then $S \leq R$ with equality when $I(x_1:x_2) = 0$.
\end{theorem}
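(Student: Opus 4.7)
The plan is to start from the compact identity already derived immediately above the statement, namely
\begin{equation*}
    S - R = I(x_1,x_2:y) - I(x_1:y) - I(x_2:y),
\end{equation*}
and then use the hypothesis $C(x_1,x_2\mid y)=0$ to collapse each of the three mutual-information terms on the right-hand side into an unconditional Kolmogorov complexity. The key observation is that if $y$ specifies the pair $(x_1,x_2)$ with no extra bits, then it also specifies each coordinate separately, so $C(x_1\mid y)=^+ 0$ and $C(x_2\mid y)=^+ 0$ as well (the projection from $(x_1,x_2)$ to $x_i$ is computable, costing only a constant). Substituting into the definition $I(x:y)=C(x)-C(x\mid y)$ then yields $I(x_1:y)=^+ C(x_1)$, $I(x_2:y)=^+ C(x_2)$, and $I(x_1,x_2:y)=^+ C(x_1,x_2)$.

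Plugging these three simplifications into the identity above gives
\begin{equation*}
    S - R =^+ C(x_1,x_2) - C(x_1) - C(x_2).
\end{equation*}
The right-hand side is, by the symmetry of algorithmic information, exactly $-I(x_1:x_2)$ (up to the usual $O(1)$ term), so we obtain $R - S =^+ I(x_1:x_2)$. Since algorithmic mutual information is nonnegative up to an additive constant, this delivers $S\leq R$, and equality holds precisely when $I(x_1:x_2)=0$, which is the statistical-independence condition in the theorem.

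The only real subtlety is keeping track of the additive $O(1)$ slack inherent to plain Kolmogorov complexity: the hypothesis $C(x_1,x_2\mid y)=0$ should be read as $=^+ 0$, and the final inequality should likewise be interpreted up to a constant depending on the fixed universal machine $U$. Once one commits to this convention, the argument is essentially a one-line rearrangement of the identity $S-R = I(x_1,x_2:y) - I(x_1:y) - I(x_2:y)$; the main conceptual step, and the one worth emphasising in the write-up, is the reduction from $C(x_1,x_2\mid y)=0$ to $C(x_1\mid y)=^+ C(x_2\mid y)=^+ 0$, which is what lets each $I(x_i:y)$ collapse to $C(x_i)$ and exposes $R-S$ as the algorithmic mutual information between the two inputs.
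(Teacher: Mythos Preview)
Your argument is correct and follows essentially the same route as the paper: both start from the identity $S-R = I(x_1,x_2:y)-I(x_1:y)-I(x_2:y)$, use $C(x_1,x_2\mid y)=0$ to infer $C(x_i\mid y)=0$, reduce to $S-R = C(x_1,x_2)-C(x_1)-C(x_2) = -I(x_1:x_2)$, and read off the conclusion. Your explicit remark about the $O(1)$ slack is a welcome clarification that the paper leaves implicit.
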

\begin{proof} 
Using the prior expression we expand the three C-information slices into their respective C-entropies.
\begin{align*}
    S - R &= I(x_1,x_2:y) - I(x_1:y) - I(x_2:y) \\
    &= C(x_1,x_2) - C(x_1) - C(x_2) - C(x_1,x_2|y) + C(x_1|y) + \\
	&C(x_2|y) .
\end{align*}
Given that $C(x_1,x_2|y)=0$, we know likewise that $C(x_1|y)=C(x_2|y) = 0$; we simplify the above,
\begin{align*}
    S - R &=  C(x_1,x_2) - C(x_1) - C(x_2) \\
    &=  C(x_1) + C(x_2|x_1) - C(x_1) - C(x_2) \\
   &= -I(x_1:x_2).
\end{align*}
From the above we have,
\[
    S = R - I(x_1:x_2) \; .
\]
Which entails $S \leq R$ with equality when $I(x_1:x_2) = 0$.
\qed
\end{proof}

~

The above result shows that synergy (i.e. integrated information) is equivalent to redundancy (i.e. data compression) for lossless functions operating on statistically independent inputs. However, an obstacle remains to expressing synergy in this format. Although Griffith's (2014) formulation of synergy identifies the link with data compression, giving a definition of the C-information slices $R,U_1,U_2,S$ based
on $C$-complexity is not trivial.

To quantify synergy for lossless functions using C-complexity, Tononi's (2008) definition of integrated infromation must be somehow translated from its original operational framework of Shannon information theory to that of algorithmic information theory. We now show that the most natural way of performing this translation does not succeed.

Suppose the synergy of function $(x,y) \mapsto z$ is defined as
$$S_0(x,y:z) = C(z|x) + C(z|y) - C(z|xy) - C(z|x\cap y)$$
where $C(z|x\cap y)$ is the shortest program that outputs $z$ given advice $x$ or $y$, (i.e. the program
outputs $z$ on any of the two advices $x$ or $y$). 
The following result shows that, using this definition, the concatenation function turns out to have high $S_0$ synergy, which is anomalous.
\begin{theorem}
Consider the concatenation function $z(x,y)=xy$. Then $z$ is a lossless function of $S_0$ synergy $|z|/2$.
\end{theorem}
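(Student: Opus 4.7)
The plan is to verify losslessness and then evaluate each of the four terms in $S_0$ under the intended regime where $x$ and $y$ are algorithmically independent random strings of common length $n$, so $|z|=2n$, $C(x) \approx C(y) \approx n$, $C(xy) \approx 2n$, and $I(x:y) \approx 0$. Losslessness is immediate: from $z = xy$ we recover $(x,y)$ by splitting at the known midpoint, so $C(x,y \mid z) = O(1)$.

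Three of the four quantities in the definition of $S_0$ are routine. Trivially $C(z \mid xy) = O(1)$ since $z$ coincides with the advice. For $C(z \mid x)$, given $x$ the only missing information is $y$, and algorithmic independence gives $C(y \mid x) = n + O(\log n)$; symmetrically $C(z \mid y) = n + O(\log n)$. Substituting into the definition, the theorem reduces to the single estimate
\[ C(z \mid x \cap y) \;=\; n + O(\log n) \;=\; |z|/2 + O(\log n). \]
The lower bound is easy: any program $p$ achieving $C(z \mid x \cap y)$ in particular outputs $z$ on advice $x$, so $C(z \mid x) \leq |p| + O(1)$, forcing $|p| \geq n - O(\log n)$.

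The main obstacle, and the only place that requires a real construction, is the matching upper bound. The program I have in mind stores $x \oplus y$ (exactly $n$ bits) together with a short tag distinguishing $x$ from $y$, for example the index of the first position where the two strings disagree along with the value of $x$ at that position, at total cost $O(\log n)$ bits. On advice $a \in \{x, y\}$ the program computes $b = a \oplus (x \oplus y)$, observes that $\{a, b\} = \{x, y\}$, uses the tag to identify which of $a, b$ equals $x$, and emits the concatenation in the correct order. This yields $C(z \mid x \cap y) \leq n + O(\log n)$, and combining the four estimates gives $S_0 = |z|/2 + O(\log n)$, which is the claimed value up to the usual additive logarithmic slack of algorithmic information theory.
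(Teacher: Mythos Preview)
Your proof is correct and follows essentially the same route as the paper: evaluate the four terms of $S_0$ on independent random inputs, with the only nontrivial step being the upper bound on $C(z\mid x\cap y)$ via the XOR trick. The one difference is cosmetic: rather than storing an explicit $O(\log n)$-bit tag (first disagreeing index plus the bit of $x$ there), the paper simply \emph{chooses} $x$ to begin with $0$ and $y$ to begin with $1$, so the first bit itself serves as a free tag; this shaves the slack from $O(\log n)$ down to $O(1)$, but the idea is identical.
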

\begin{proof}
Pick two independent $n/2$-bit random strings $x,y$ starting with $0$ resp. $1$
i.e. $x=0\ldots$, $y=1\ldots$ and $C(x|y)=n/2$ and $C(y|x)=n/2$.

By definition of synergy
$$S_0(x,y:z) = C(z|x) + C(z|y) - C(z|xy) - C(z|x\cap y)$$
where the first two terms are $n/2$, the third is $O(1)$, and the last is $n/2$ because
of the following program $p$. $p$ is an $O(1)$ instructions part followed by the bitwise 
XOR of $x,y$ denoted $w$, i.e. $n/2 +O(1)$ bits total.
Instructions: Given advice $a$, XOR $a$ with $w$ to obtain $d$. If $d$ starts with $0$
output $da$, else output $ad$. So when $a=x$, $d=y$ and we output $ad=xy$. Similarly when
$a=y$ then $d=x$ and we output $da=xy$, i.e. $C(z|x\cap y)=n/2$.
\qed
\end{proof}

~

In the following section, we outline an alternative strategy for defining integrated information using C-complexity.

\subsection{Quantifying Integration Using Edit Distance}

If data is optimally compressed then it becomes extremely difficult to edit in its compressed state. For example, imagine a compressed encoding of a Wikipedia page. You want to edit the first word on the page. But where is this word encoded in the compressed file? There is no easily delineated set of bits which corresponds to the first word and nothing else. Instead, the whole set of data has been integrated, with every bit from the original file depending on all the others. To discern the impact that the first word has had on the compressed encoding you have to understand the compression. There are no shortcuts. 

To formalize integrated information as data compression we consider a stimulus, first in its raw unintegrated state, and second, encoded in its integrated state within the brain. The level of integration is equivalent to the difficulty of identifying the raw information and editing it within its integrated state. 

In the following definition $z$ and $\bar z$ are the raw stimulus and the brain encoded stimulus. We consider the difficulty of editing $z$ into $z'$, for example, editing the smell of chocolate to turn it into the smell of lavender. If this operation is performed on a raw, unintegrated dataset then the task is straight-forward: the bits that differ are simply altered. Consider, however, the challenge for the neurosurgeon operating on Amy's brain. If the stimulus has not been widely integrated then the neurosurgeon can concentrate on a single localised area of her brain and hopefully the encoding will be overt, reflecting the original unintegrated format in which the information was originally transmitted. However, if the stimulus has been successfully integrated (i.e. compressed) then its encoding will reflect the overlap of patterns between it and the entire contents of Amy's brain. Its representation will be widely distributed, with effects on all kinds of other memories, making it impossible to isolate and edit. 

We quantify the integration of an encoding process operating on a stimulus as the minimum informational distance between the original state of the encoded stimulus and any possible edited state. If every state is completely different to the original, then the integration is 1; if there exists an edited state which is only trivially removed, the integration is 0.

For example, when an image on a digital camera is altered, the informational distance between the camera's original and edited state is small. In contrast, the neurosurgeon struggles to edit the memories in Amy's brain: changing even the slightest detail requires the contents of her brain to be completely reconstructed. The edit distance is so great that her original brain state is largely useless for identifying a target edited brain state. 

Formally, the edit distance of $m$ at point $z$ is a number between 0 and 1 that measures the level
of integration of $m(z)$. It is measured by looking at all strings $z'$ similar to $z$, and finding
the one that minimizes the ratio of length of the shortest description of $m(z)$ given $m(z')$ to the
length of shortest description of $m(z)$. The smallest ratio obtained is the edit distance.
Since the numerator is always positive and less or equal to the denominator, the edit distance
is between 0 and 1. This edit distance quantifies information integration for lossless functions.

\begin{definition}
The edit distance of $m$ at point $z$ is given by $$\min_{z'\neq z: C(z|z')\leq \log |z|} \{\frac{C(m(z)|m(z'))}{C(m(z))}\}.$$
\end{definition}

\section{On the Computability of Integration}

In this section we prove an interesting result using the above definition, namely that lossless information integration cannot be achieved by a computable process. 

According to the integrated information theory, when we think of another person as conscious we are viewing them as a completely integrated and unified information processing system, with no feasible means of disintegrating their conscious cognition into disjoint components. We assume that their behaviour calls into play all of their memories and reflects full coordination of their sensory input. We now prove that this form of complete integration cannot be modelled computationally.

An \emph{integrating} function's output is such that the information
of its two (or more) inputs is completely integrated. More formally,
\begin{definition}
A 1-1 function $m: z=(z_1,z_2) \mapsto \bar z$ is integrating if for any strings $z\neq z'$,
$C(\bar z' \ | \ \bar z) \geq C(\bar z') - C(z' \ | \ z)$.
\end{definition}
i.e, the knowledge of $m(z)$ does not help to describe $m(z')$, when $z$ and $z'$ are close.

\begin{theorem}
No integrating function is computable.
\end{theorem}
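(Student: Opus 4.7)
The plan is a proof by contradiction: assume that $m$ is simultaneously integrating and computable, and then exhibit a pair of nearby inputs $z \neq z'$ on which these two properties are incompatible. The intuition is that computability lets us translate $\bar z = m(z)$ into $\bar z' = m(z')$ at a cost proportional to $C(z' \mid z)$, whereas the integrating inequality forces this conversion cost to be at least $C(\bar z') - C(z' \mid z)$, which for incompressible images is much larger.

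First I would establish the cheap-translation bound $C(\bar z' \mid \bar z) \leq C(z' \mid z) + O(1)$. Since $m$ is $1$-$1$ and total computable, given any $\bar z$ in the range of $m$ a fixed $O(1)$-sized routine can dovetail $m$ over all inputs and recover the unique preimage $z$. Composing this recovery with a shortest program producing $z'$ from $z$ and a final application of $m$ yields $\bar z'$; the overall constant absorbs a single description of $m$.

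Next I would specialize to one-bit perturbations and count. For each size $n$ consider the valid pair-encodings $z$ of length $n$ together with the neighbor $z'$ obtained by flipping a fixed coordinate bit of the encoding, so that $C(z' \mid z) = O(1)$. Since $m$ is $1$-$1$, the $2^n$ images $m(z)$ are distinct, and at most $2^{n-c}$ strings can have Kolmogorov complexity below $n-c$. A union bound over the bad events for $z$ and for $z'$ then guarantees, for $c = 2$ and all sufficiently large $n$, the existence of a single pair $(z, z')$ with $C(\bar z) \geq n - 2$ and $C(\bar z') \geq n - 2$ simultaneously.

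Finally I would plug such a pair into the integrating inequality to obtain $C(\bar z' \mid \bar z) \geq C(\bar z') - C(z' \mid z) \geq n - O(1)$, which contradicts the $O(1)$ bound from the first step once $n$ is sufficiently large, so no integrating function can be computable. I expect the main subtlety to lie in the counting step, specifically in controlling $C(\bar z)$ and $C(\bar z')$ simultaneously along a one-bit move in the domain; a plain union bound powered by injectivity of $m$ should suffice, but care is needed so that the fixed bit we flip always yields a well-formed pair encoding.
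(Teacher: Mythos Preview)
Your proposal is correct and follows essentially the same route as the paper: assume $m$ computable, use invert-and-recompute to get $C(\bar z'\mid\bar z)=O(1)$ for a one-bit neighbor $z'$, and contradict the integrating inequality by exhibiting an image $\bar z'$ of complexity close to $n$. The only cosmetic difference is that the paper obtains the high-complexity image via the standard lemma $C(m(z))=^{+}C(z)$ for a $1$--$1$ computable $m$ applied to a random $z$, whereas you reach the same conclusion by a direct counting/union-bound on the images; the pair-encoding worry you flag is handled (implicitly in the paper) by taking equal-length halves so that any bit flip still encodes a pair.
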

\begin{proof}
Suppose $m$ is a computable integrating function.
Let $z$ be a random string, i.e. such that $C(z)\geq |z|$.
Let $z'$ be the string obtained by flipping the first bit of $z$.
We have $C(z' \ | \ z) = O(1)$.
Consider the following program for $\bar z'$ given $\bar z$:
Cycle through all strings until the unique $z$ is found such that $m(z)=\bar z$.
Compute $z'$ by flipping the first bit of $z$.
Compute $\bar z' = m(z')$.

Since $m$ is computable, the program above is of constant size i.e.,
$C(\bar z' \ | \ \bar  z) = O(1)$.
Also $C(\bar z') =^+ C(z') =^+ C(z) \geq |z|$ because $m$ is computable, 1-1 and by choice of $z$.

Because $m$ is integrating, we have $C(\bar z' \ | \ \bar z)\geq C(\bar z') - C(z' \ | \ z) = |z| - O(1)$,
a contradiction.
\qed
\end{proof}

~

The implications of this proof are that we have to abandon either the idea that people enjoy genuinely unitary consciousness or that brain processes can be modelled computationally. 

If a person's behaviour is totally resistant to disintegration (i.e. we cannot analyse it independently from the rest of their cognition), then it implies that something is going on in their brain that is so complex it cannot feasibly be reversed. In line with this view, Bringsjord and Zenzen (1997) specifically argue that the difference between cognition and computation is that computation is reversible whereas cognition is not. For instance, it is impossible for the neurosurgeon to operate on Amy's brain and directly edit her conscious memories, because the process of integration is irreversibly complex. 

Yet Amy's brain is a physical causal system which follows the laws of physics. Information flows into Amy's brain conducted by nerve impulses and gets processed by neurons through biochemical signalling. Whatever information-lossless changes result should theoretically be reversible. To argue otherwise seems to suggest that a form of magic is going on in the brain, which is beyond computational modelling.   

McGinn (1991) points out that intractable complexity of the mind does not necessarily require the brain to transcend the laws of physics: instead, the intractability can have an observer specific source. He argues that the mind-body problem is cognitively closed to humans in the same way that quantum mechanics is closed to a zebra. This perspective, known as `new mysterianism', maintains that the hard problem of consciousness stems, not from a supernatural process, but from natural limits in how humans form concepts. 

Similarly, the apparent unitary nature of consciousness does not require a mystical process of integration which transcends physical computability. Our result merely establishes a link between integration and irreversibility, the cause of which can be due to limitations in the observer's perspective. 
While we intuitively assume that consciousness must be a fundamental property as defined from a God's eye perspective, the attribution of this property always takes place in a social context. When people attribute consciousness to a system they are acknowledging a \emph{subjective} inability to break it down into a set of independent components, forcing them to treat its actions as the behaviour of a unified, integrated whole. The irreversibilty here is observer-centric, as opposed to absolute.
Rather than establishing a new property of consciousness, our result can therefore be interpreted as merely clarifying what is meant by the use of this concept. Specifically, conscious behaviour is that which is resistant to our best attempts at decomposition.

\subsection{Neuroscientific Modelling}

An alternative account is simply that consciousness does not exist: the unitary appearance of people's behaviour is recognizable as an illusion. Dennett (1991) adopts this perspective with his multiple drafts model. He views consciousness as being inherently decomposable, criticizing the idea of what he calls the `Cartesian theatre', a point where all of the information processing in the brain is integrated. Dennett presents consciousness as a succession of multiple drafts, a process in constant flux, without central organization or irreversible binding.

Could neuroscience provide us with a mechanical model of human behaviour that supersedes the value of attributing consciousness, as Dennett (1991) suggests? 
It sometimes arises that a system to which we have previously attributed unitary consciousness is subsequently recognized as following mechanical rules. For example, when conversing with a chatbot, we might suddenly notice that its responses can be predicted solely on the basis on the preceding sentence. We then adopt the superior rule-based model and cease to attribute consciousness.

Ultimately, for consciousness to be revealed as an illusion, people would have to agree that neuroscientific modelling succeeds in disintegrating every aspect of behaviour. Note that the key word here is `people': people would have to agree. Arguably, the ultimate standard that we have for measurement depends on the notion of other observers, which are themselves integrated, unified wholes. For this reason, Maguire and Maguire (2011) speculate that future developments in information theory will recognize the intractable complexity of the mind as a key concept supporting the notion of objectivity in measurement, a shift which would undermine the meaningfulness of the goal to `understand' the mind. 

\subsection{Scramble In, Scramble Out}

Assuming integrated consciousness is a genuine phenomenon, its noncomputability has interesting implications for what has to happen in the brain. When stimuli are picked up by the brain they enter at disintegrated locations. For example, visual stimuli enter through the optic nerve and are processed initially by the primary visual cortex. When a visual stimulus is encoded in the occipital lobe it clearly has not yet been integrated with the rest of cognition. For instance, Stanely, Li and Dan (1999) analysed an array of electrodes embedded in the thalamus lateral geniculate nucleus area of a cat and were able to decode the signals to generate watchable movies of what the cat was observing. 

Similarly, the initiation of action must be localised in particular areas of the brain which control the relevant muscles. This readiness potential must detach from the rest of the brain's processing and hence is no longer integrated. For example, following up on Libet's original experiments, Siong Soon et al. (2008) demonstrated that, by monitoring activity in the frontopolar prefrontal cortex they could predict a participant's decision to move their right or left hand several seconds before the participant became aware of it.   

However, if integration is necessary for consciousness, then somewhere between the stimulus entering the brain and the decision leaving the brain, there is a point where the information cannot be fully disentangled from the rest of cognition. This integrated processing cannot be localised to any part of the brain or any specific point in time. The contents of cognition are effectively unified. We label this idea `scramble in, scramble out' to reflect the irreversible integration and disintegration that must occur between observation and action. 

The aspects of cognition that have been clarified by neuroscience so far tend to involve processing before scramble in or after scramble out. For example, it is well established that the occipital lobe is involved in visual processing or that the prefrontal cortex encodes future actions before they are performed. These components are modular in that they have specialised, encapsulated, evolutionarily developed functions. However, somewhere between input and output there must also be a binding process of integration that no computational modelling can disentangle. 

Fodor (2001) summarizes as follows: ``Local mental processes appear to accommodate pretty well to Turing's theory that thinking is computation; they appear to be largely modular...By contrast, what we've found out about global cognition is mainly that it is different from the local kind...we deeply do not understand it''. While neuroscience might shed light on the input and output functions of the brain, the quantification for integrated information we have presented here implies that it will be unable to shed light on the complex tangle that is core consciousness.

\nocite{*}

\bibliographystyle{apacite}

\setlength{\bibleftmargin}{.125in}
\setlength{\bibindent}{-\bibleftmargin}

\bibliography{References3}

\begin{thebibliography}{}

\bibitem[\protect\citeauthoryear{%
Alberts%
\ \protect\BOthers{.}}{%
Alberts%
\ \protect\BOthers{.}}{%
{\protect\APACyear{1997}}%
}]{%
alberts1997molecular}%
\APACinsertmetastar{%
alberts1997molecular}%
Alberts, B.%
, Johnson, A.%
, Lewis, J.%
, Raff, M.%
, Roberts, K.%
\BCBL{}\ \BBA{} Walter, P.%
%
\unskip\
\newblock
\APACrefYear{1997}.
\newblock
\APACrefbtitle{Molecular Biology of the Cell}{Molecular biology of the cell}.
\PrintBackRefs{\CurrentBib}

\bibitem[\protect\citeauthoryear{%
Bringsjord%
\ \BBA{} Zenzen%
}{%
Bringsjord%
\ \BBA{} Zenzen%
}{%
{\protect\APACyear{1997}}%
}]{%
bringsjord1997cognition}%
\APACinsertmetastar{%
bringsjord1997cognition}%
Bringsjord, S.%
\BCBT{}\ \BBA{} Zenzen, M.%
%
\unskip\
\newblock
\APACrefYearMonthDay{1997}{}{}.
\newblock
\BBOQ{}\APACrefatitle{Cognition is not computation: The argument from
  irreversibility}{Cognition is not computation: The argument from
  irreversibility}.\BBCQ{}
\newblock
\APACjournalVolNumPages{Synthese}{113}{2}{285--320}.
\PrintBackRefs{\CurrentBib}

\bibitem[\protect\citeauthoryear{%
Dennett%
}{%
Dennett%
}{%
{\protect\APACyear{1991}}%
}]{%
dennett1991consciousness}%
\APACinsertmetastar{%
dennett1991consciousness}%
Dennett, D\BPBI C.%
%
\unskip\
\newblock
\APACrefYear{1991}.
\newblock
\APACrefbtitle{Consciousness Explained}{Consciousness explained}.
\newblock
\APACaddressPublisher{}{Little, Brown}.
\PrintBackRefs{\CurrentBib}

\bibitem[\protect\citeauthoryear{%
Fodor%
}{%
Fodor%
}{%
{\protect\APACyear{2001}}%
}]{%
fodor2001mind}%
\APACinsertmetastar{%
fodor2001mind}%
Fodor, J\BPBI A.%
%
\unskip\
\newblock
\APACrefYear{2001}.
\newblock
\APACrefbtitle{The Mind Doesn't Work That Way: The Scope and Limits of
  Computational Psychology}{The mind doesn't work that way: The scope and
  limits of computational psychology}.
\newblock
\APACaddressPublisher{}{MIT press}.
\PrintBackRefs{\CurrentBib}

\bibitem[\protect\citeauthoryear{%
Griffith%
}{%
Griffith%
}{%
{\protect\APACyear{2014}}%
}]{%
griffith}%
\APACinsertmetastar{%
griffith}%
Griffith, V.%
%
\unskip\
\newblock
\APACrefYearMonthDay{2014}{}{}.
\newblock
\BBOQ{}\APACrefatitle{A principled infotheoretic $\Psi$-like measure}{A
  principled infotheoretic $\psi$-like measure}.\BBCQ{}
\newblock
\APACjournalVolNumPages{arXiv preprint arXiv:1401.0978}{}{}{}.
\PrintBackRefs{\CurrentBib}

\bibitem[\protect\citeauthoryear{%
Legg%
\ \BBA{} Hutter%
}{%
Legg%
\ \BBA{} Hutter%
}{%
{\protect\APACyear{2007}}%
}]{%
legg2007tests}%
\APACinsertmetastar{%
legg2007tests}%
Legg, S.%
\BCBT{}\ \BBA{} Hutter, M.%
%
\unskip\
\newblock
\APACrefYearMonthDay{2007}{}{}.
\newblock
\BBOQ{}\APACrefatitle{Tests of machine intelligence}{Tests of machine
  intelligence}.\BBCQ{}
\newblock
\BIn{} \APACrefbtitle{50 Years of Artificial Intelligence}{50 years of
  artificial intelligence}\ (\BPGS\ 232--242).
\newblock
\APACaddressPublisher{}{Springer}.
\PrintBackRefs{\CurrentBib}

\bibitem[\protect\citeauthoryear{%
Levin%
}{%
Levin%
}{%
{\protect\APACyear{1974}}%
}]{%
Levin1974}%
\APACinsertmetastar{%
Levin1974}%
Levin, L\BPBI A.%
%
\unskip\
\newblock
\APACrefYearMonthDay{1974}{}{}.
\newblock
\BBOQ{}\APACrefatitle{Laws of information conservation (non-growth) and aspects
  of the foundation of probability theory}{Laws of information conservation
  (non-growth) and aspects of the foundation of probability theory}.\BBCQ{}
\newblock
\APACjournalVolNumPages{Problems Information Transmission}{10}{3}{206--210}.
\PrintBackRefs{\CurrentBib}

\bibitem[\protect\citeauthoryear{%
Li%
\ \BBA{} Vit{\'a}nyi%
}{%
Li%
\ \BBA{} Vit{\'a}nyi%
}{%
{\protect\APACyear{2008}}%
}]{%
li2008introduction}%
\APACinsertmetastar{%
li2008introduction}%
Li, M.%
\BCBT{}\ \BBA{} Vit{\'a}nyi, P.%
%
\unskip\
\newblock
\APACrefYear{2008}.
\newblock
\APACrefbtitle{An Introduction to Kolmogorov Complexity and its
  Applications}{An introduction to kolmogorov complexity and its applications}.
\newblock
\APACaddressPublisher{}{Springer}.
\PrintBackRefs{\CurrentBib}

\bibitem[\protect\citeauthoryear{%
Maguire%
\ \BBA{} Maguire%
}{%
Maguire%
\ \BBA{} Maguire%
}{%
{\protect\APACyear{2010}}%
}]{%
maguireconsciousness}%
\APACinsertmetastar{%
maguireconsciousness}%
Maguire, P.%
\BCBT{}\ \BBA{} Maguire, R.%
%
\unskip\
\newblock
\APACrefYearMonthDay{2010}{}{}.
\newblock
\BBOQ{}\APACrefatitle{Consciousness is data compression}{Consciousness is data
  compression}.\BBCQ{}
\newblock
\BIn{} \APACrefbtitle{Proceedings of the Thirty-Second Conference of the
  Cognitive Science Society}{Proceedings of the thirty-second conference of the
  cognitive science society}\ (\BPGS\ 748--753).
\PrintBackRefs{\CurrentBib}

\bibitem[\protect\citeauthoryear{%
Maguire%
\ \BBA{} Maguire%
}{%
Maguire%
\ \BBA{} Maguire%
}{%
{\protect\APACyear{2011}}%
}]{%
maguire2011understanding}%
\APACinsertmetastar{%
maguire2011understanding}%
Maguire, P.%
\BCBT{}\ \BBA{} Maguire, R.%
%
\unskip\
\newblock
\APACrefYearMonthDay{2011}{}{}.
\newblock
\BBOQ{}\APACrefatitle{Understanding the Complexity of the Mind}{Understanding
  the complexity of the mind}.\BBCQ{}
\newblock
\APACjournalVolNumPages{European Perspectives on Cognitive Science}{}{}{}.
\PrintBackRefs{\CurrentBib}

\bibitem[\protect\citeauthoryear{%
McGinn%
}{%
McGinn%
}{%
{\protect\APACyear{1991}}%
}]{%
mcginn1991problem}%
\APACinsertmetastar{%
mcginn1991problem}%
McGinn, C.%
%
\unskip\
\newblock
\APACrefYear{1991}.
\newblock
\APACrefbtitle{The Problem of Consciousness: Essays Towards a Resolution}{The
  problem of consciousness: Essays towards a resolution}.
\newblock
\APACaddressPublisher{}{Blackwell Oxford, UK}.
\PrintBackRefs{\CurrentBib}

\bibitem[\protect\citeauthoryear{%
Solomonoff%
}{%
Solomonoff%
}{%
{\protect\APACyear{1964}}%
}]{%
solomonoff1964formal}%
\APACinsertmetastar{%
solomonoff1964formal}%
Solomonoff, R\BPBI J.%
%
\unskip\
\newblock
\APACrefYearMonthDay{1964}{}{}.
\newblock
\BBOQ{}\APACrefatitle{A formal theory of inductive inference. Part I}{A formal
  theory of inductive inference. part i}.\BBCQ{}
\newblock
\APACjournalVolNumPages{Information and Control}{7}{1}{1--22}.
\PrintBackRefs{\CurrentBib}

\bibitem[\protect\citeauthoryear{%
Soon%
, Brass%
, Heinze%
\BCBL{}\ \BBA{} Haynes%
}{%
Soon%
\ \protect\BOthers{.}}{%
{\protect\APACyear{2008}}%
}]{%
soon2008unconscious}%
\APACinsertmetastar{%
soon2008unconscious}%
Soon, C\BPBI S.%
, Brass, M.%
, Heinze, H\BHBI J.%
\BCBL{}\ \BBA{} Haynes, J\BHBI D.%
%
\unskip\
\newblock
\APACrefYearMonthDay{2008}{}{}.
\newblock
\BBOQ{}\APACrefatitle{Unconscious determinants of free decisions in the human
  brain}{Unconscious determinants of free decisions in the human brain}.\BBCQ{}
\newblock
\APACjournalVolNumPages{Nature Neuroscience}{11}{5}{543--545}.
\PrintBackRefs{\CurrentBib}

\bibitem[\protect\citeauthoryear{%
Stanley%
, Li%
\BCBL{}\ \BBA{} Dan%
}{%
Stanley%
\ \protect\BOthers{.}}{%
{\protect\APACyear{1999}}%
}]{%
stanley1999reconstruction}%
\APACinsertmetastar{%
stanley1999reconstruction}%
Stanley, G\BPBI B.%
, Li, F\BPBI F.%
\BCBL{}\ \BBA{} Dan, Y.%
%
\unskip\
\newblock
\APACrefYearMonthDay{1999}{}{}.
\newblock
\BBOQ{}\APACrefatitle{Reconstruction of natural scenes from ensemble responses
  in the lateral geniculate nucleus}{Reconstruction of natural scenes from
  ensemble responses in the lateral geniculate nucleus}.\BBCQ{}
\newblock
\APACjournalVolNumPages{The Journal of Neuroscience}{19}{18}{8036--8042}.
\PrintBackRefs{\CurrentBib}

\bibitem[\protect\citeauthoryear{%
Tononi%
}{%
Tononi%
}{%
{\protect\APACyear{2008}}%
}]{%
tononi2008consciousness}%
\APACinsertmetastar{%
tononi2008consciousness}%
Tononi, G.%
%
\unskip\
\newblock
\APACrefYearMonthDay{2008}{}{}.
\newblock
\BBOQ{}\APACrefatitle{Consciousness as integrated information: a provisional
  manifesto}{Consciousness as integrated information: a provisional
  manifesto}.\BBCQ{}
\newblock
\APACjournalVolNumPages{The Biological Bulletin}{215}{3}{216--242}.
\PrintBackRefs{\CurrentBib}

\bibitem[\protect\citeauthoryear{%
Williams%
\ \BBA{} Beer%
}{%
Williams%
\ \BBA{} Beer%
}{%
{\protect\APACyear{2010}}%
}]{%
williams2010nonnegative}%
\APACinsertmetastar{%
williams2010nonnegative}%
Williams, P\BPBI L.%
\BCBT{}\ \BBA{} Beer, R\BPBI D.%
%
\unskip\
\newblock
\APACrefYearMonthDay{2010}{}{}.
\newblock
\BBOQ{}\APACrefatitle{Nonnegative decomposition of multivariate
  information}{Nonnegative decomposition of multivariate information}.\BBCQ{}
\newblock
\APACjournalVolNumPages{arXiv preprint arXiv:1004.2515}{}{}{}.
\PrintBackRefs{\CurrentBib}

\end{thebibliography}

\end{document}